\newtheorem{example}{Example}
\newtheorem{lemma}{Lemma}
\newtheorem{definition}{Definition}
\newtheorem{assumption}{Assumption}
\newtheorem{proposition}{Proposition}
\DeclareMathOperator{\Pref}{Pref}
\newcommand{\gera}[1]{{\textcolor{purple}{[Gera: #1]}}}
\newcommand{\aref}[1]{\hyperref[#1]{Appendix~\ref*{#1}}}
\newcommand{\Login}{\ensuremath{\text{\scriptsize\textsf{Login}}}}
\newcommand{\Logout}{\ensuremath{\text{\scriptsize\textsf{Logout}}}}
\newcommand{\AddToCart}{\ensuremath{\text{\scriptsize\textsf{AddToCart}}}}
\newcommand{\RemoveFromCart}{\ensuremath{\text{\scriptsize\textsf{RemoveFromCart}}}}
\newcommand{\Checkout}{\ensuremath{\text{\scriptsize\textsf{Checkout}}}}
\newcommand{\StartSession}{\ensuremath{\text{\scriptsize\textsf{StartSession}}}}
\DeclareMathOperator{\UNR}{UNR}
\DeclareMathOperator{\FDR}{FDR}
\DeclareMathOperator{\ADR}{ADR}
\newcommand{\tuple}[1]{({#1})}
\def\BibTeX{{\rm B\kern-.05em{\sc i\kern-.025em b}\kern-.08em
    T\kern-.1667em\lower.7ex\hbox{E}\kern-.125emX}}
\begin{document}

\makeatletter 
\newcommand{\linebreakand}{%
  \end{@IEEEauthorhalign}
  \hfill\mbox{}\par
  \mbox{}\hfill\begin{@IEEEauthorhalign}
}
\makeatother 


\widowpenalty20000
\clubpenalty20000

\title{Active automata learning and testing techniques for obtaining concise bug descriptions}
\title{Obtaining Concise Bug Descriptions Using Automata Learning and Testing Techniques}
\title{Model-Based Derivation of Concise Bug Descriptions Using Automata Learning and Testing Techniques}
\title{Automata Models for Effective\\Bug Pattern Description}





\author{
    \IEEEauthorblockN{
        Tom Yaacov\IEEEauthorrefmark{1}\orcidlink{0000-0002-0565-6506}, 
        Gera Weiss\IEEEauthorrefmark{1}\orcidlink{0000-0002-5832-8768}, 
        Gal Amram\IEEEauthorrefmark{2}\orcidlink{0000-0003-2138-7542}, and 
        Avi Hayoun\IEEEauthorrefmark{1}\orcidlink{0009-0009-6607-2158}}
    \IEEEauthorblockA{
        \begin{tabular}{cc}
            \begin{tabular}{@{}c@{}}
            \\
                \IEEEauthorrefmark{1}
                    \textit{Ben-Gurion University of the Negev}\\
                   \{tomya, hayounav\}@post.bgu.ac.il, geraw@cs.bgu.ac.il
            \end{tabular} & \begin{tabular}{@{}c@{}}
            \\
                \IEEEauthorrefmark{2}
                    \textit{IBM Research}\\
                    gal.amram@ibm.com
            \end{tabular}
        \end{tabular}
    }
}

\maketitle

\begin{abstract}
Debugging complex systems is a crucial yet time-consuming task. This paper presents the use of automata learning and testing techniques to obtain concise and informative bug descriptions. We introduce the concepts of Failure Explanations (FE), Eventual Failure Explanations (EFE), and Early Detection (ED) to provide meaningful summaries of failing behavior patterns. By factoring out irrelevant information and focusing on essential test patterns, our approach aims to enhance bug detection and understanding. We evaluate our methods using various test patterns and real-world benchmarks, demonstrating their effectiveness in producing compact and informative bug descriptions. 
\end{abstract}

\newcommand{\badge}[3]{
	\ifthenelse{\equal{#2}{}}{}{
		\begin{tikzpicture}[overlay, remember picture]
			\node[xshift=-3.2cm,yshift=-1.1cm] at (current page.north east) {\includegraphics[width=1.8cm]{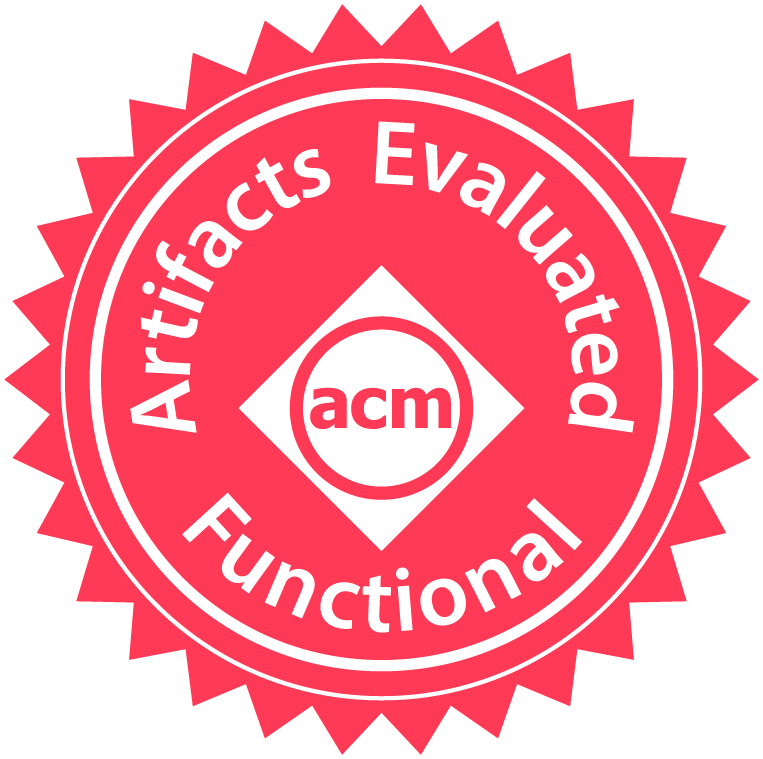}};
		\end{tikzpicture}}
	\ifthenelse{\equal{#2}{}}{}{
		\begin{tikzpicture}[overlay, remember picture]
			\node[xshift=-1.2cm,yshift=-1.1cm] at (current page.north east) {\includegraphics[width=1.8cm]{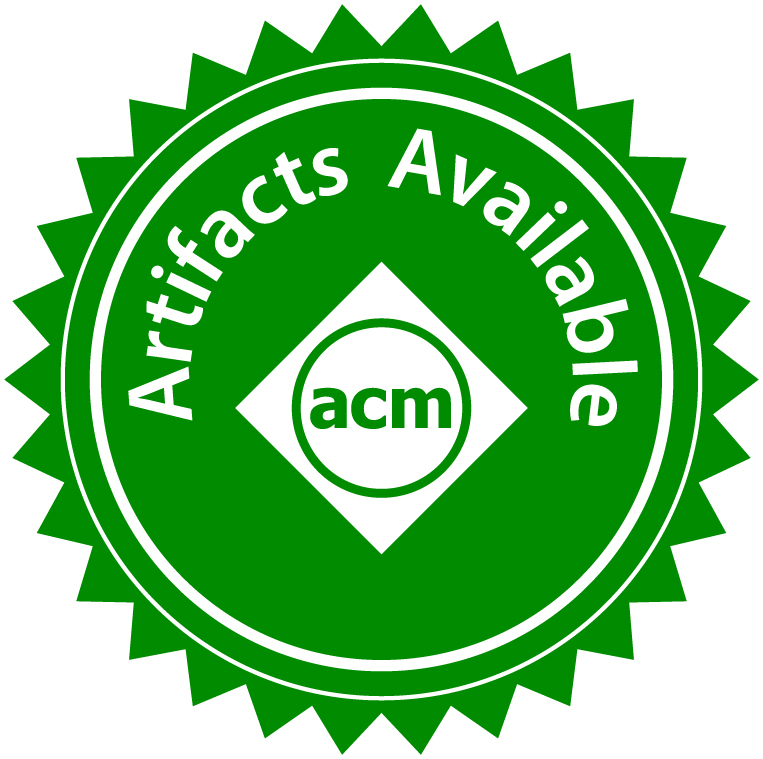}};
		\end{tikzpicture}}
}

\begin{IEEEkeywords}
automata learning, model-based debugging, failure description, model-based testing, formal methods
\end{IEEEkeywords}

\section{Introduction}
Debugging complex systems is a critical yet challenging task. Traditional debugging methods often involve manually identifying patterns of method executions or API calls that seem related to an exhibited unwanted behavior. Moreover, such methods often produce outputs that obscure relevant information about failures. This makes the process not only time-consuming but also prone to human errors. 

Our work is in model-based testing, where a formal model guides test generation and execution. We assume a model that defines the space of possible tests and a set of known failing tests. Each test is a sequence of actions that captures the core testing instructions while abstracting away implementation details. We also assume a testing infrastructure, in the form of an automaton, that translates these abstract sequences into concrete system interactions. For example, a letter representing ``AddToCart'' might expand into multiple API calls that maintain the session state, handle authentication, and verify proper error handling. This separation allows us to focus on the essential testing logic while delegating the complexities of system interaction to the infrastructure layer.

Our approach actively samples the system under test (SUT) to iteratively refine our understanding of failing behaviors. Instead of exhaustively exploring test sequences, we apply automata learning to identify patterns in failures, enabling us to generalize from observed cases to a precise description of the bug's conditions.

We aim to produce automata that meaningfully summarize the set of failed tests. These automata serve multiple purposes in the testing process:
\begin{enumerate}
    \item As a diagnostic tool, they capture the essential patterns that trigger bugs while abstracting away irrelevant details. Rather than examining individual failing tests, testers can analyze the automaton's structure to understand the core sequences of actions that lead to failures. 
    \item As a test generation aid, the automata can be used to systematically produce new test cases that are likely to expose the bug. By traversing different paths through accepting states, testers can generate variations of the failing scenario to better understand the bug's scope.  
    \item As documentation artifacts, they provide a formal yet intuitive representation of bug patterns that can be shared across development teams. Unlike textual bug reports that may be ambiguous, automata precisely specify the problematic behaviors.
    \item As regression testing guides, they can help ensure that fixes actually address the root cause by defining clear acceptance criteria. A proper fix should prevent sequences accepted by the automaton from resulting in failure.
\end{enumerate}

The main challenge lies in generating automata that effectively balance generality and specificity---capturing the essential patterns that lead to bugs while avoiding false positives (i.e., passing tests classified as failures). Applying automata learning algorithms such as \(L^*\) to accept all failing tests and reject all passing ones, as well as invalid tests, may appear simple. However, this approach often leads to the creation of overly complex models that encode system-specific details developers are already familiar with, obscuring the underlying cause of failure rather than clarifying it.

This paper proposes automata constructs that abstract and generalize failure patterns. This allows us to identify key behaviors leading to failures while filtering out irrelevant system details. We introduce three concepts:
\begin{itemize}
    \item \emph{Failure Explanations} (FE): Automata that distinguish sequences characterizing the essence of a failing behavior, helping engineers understand the core conditions under which failures occur.
    \item \emph{Eventual Failure Explanations} (EFE): Automata that accept sequences which, while initially correct, inevitably lead to a failure state, offering insight into latent issues that manifest over time.
    \item \emph{Early Detection} (ED): A heuristic for identifying problematic patterns at their earliest occurrence, before they manifest as observable failures. 
\end{itemize}
Combining these approaches provides complementary perspectives of bug patterns.

The running examples in this paper are based on an experiment testing an online shop~\cite{bar-sinai_provengo_2023}, focusing on session handling, login/logout, and cart operations. Consider the bug: ``Adding an item to the cart, then removing it twice, does not produce the expected error message''. Explaining such bugs is challenging because the test model enforces a structured sequence. For example, tests always begin with opening a session and logging in and end with checkout. This structure obscures bug explanations, as a naive description includes these details, despite offering no real value to developers debugging the issue.


As demonstrated in our experiments, our methods are intended to support an iterative workflow in which engineers refine the test model until they derive explanations that best suit their needs. This adaptive approach facilitates the diagnosis of complex software faults by emphasizing the structural properties of failures rather than incidental details of individual test executions.

The paper is structured as follows: In \autoref{sec:preliminaries}, we establish the formal framework and notations used throughout the paper. \autoref{sec:setting} presents our formal problem setting. \autoref{sec:example} introduces the running example that illustrates our key concepts throughout the paper. We then present our core theoretical contributions: failure explanations in \autoref{sec:fe}, eventual failure explanations in \autoref{sec:efe}, and early detection in \autoref{sec:early-detection}. \autoref{sec:algorithm} describes our algorithmic approach for learning these explanations. \autoref{sec:evaluation} evaluates our techniques on real-world systems and a short discussion on the results. We conclude with a review of related work in \autoref{sec:related-work}.

\section{Preliminaries} 
\label{sec:preliminaries}

An \emph{alphabet} $\Sigma$ is a finite set of letters. A \textit{word} is a finite sequence $w = \sigma_1\dots \sigma_n$ of letters $\sigma_i \in \Sigma$. $\Sigma^*$ is the set of all words over $\Sigma$, while a \emph{language} $L$ is a subset of $\Sigma^*$. The concatenation of two languages $L_1$ and $L_2$, denoted $L_1L_2$, is the set of all words $w_1w_2$, where $w_1 \in L_1$ and $w_2 \in L_2$. We also use concatenation between a word and a language similarly: $wL = \{w\}L$. For $L\subseteq L'$, $L$ is \emph{extension closed with respect to $L'$} if for all $w\in L$ and $wu\in L'$, it holds that $wu\in L$. $L$ is extension closed if it is extension closed with respect to $\Sigma^*$. The set of prefixes of $L$, denoted by $\Pref(L)$, is the language $\{w \in \Sigma^*: \exists u\in \Sigma^* \text{ such that }  wu\in L\}$.

A deterministic finite automaton (DFA) $\mathcal{A}$ is a five-tuple $(Q, \Sigma, q_0, \delta, F)$, where $Q$ is a finite set of states, $q_0 \in Q$ is the initial state, $\Sigma$ is an alphabet, $\delta \colon Q \times \Sigma \to Q$ is the transition function, and $F \subseteq Q$ is the set of accepting states. The language accepted by the automaton $\mathcal{A}$ is denoted by $L(\mathcal{A})$, and is also referred to as $A$.

Throughout the paper, we may refer to a DFA and its language interchangeably. For instance, instead of explicitly stating that \( L(\mathcal{A}) \) is extension closed, we may simply say that the DFA \( \mathcal{A} \) is extension closed.


Automata learning is the process of learning an automaton from a set of observations. Automata learning techniques are categorized in two ways: \emph{active learning}, in which the system under learning is available for querying, and \emph{passive learning}, which uses a static dataset of words that are labeled according to their membership in the language of the automaton being learned.

The $L^*$ algorithm, developed by Angluin~\cite{angluin_learning_1987}, introduces a framework for active learning of DFAs in the presence of a teacher that answers two types of queries:

\begin{itemize}
    \item \emph{Membership queries}, which are used to determine whether a given word $w \in \Sigma^*$ is in $U$.
    \item \emph{Equivalence queries}, which are used to check if the language accepted by a given hypothesis DFA $\mathcal{A}$ is equal to the target DFA.
\end{itemize}

Essentially, $L^*$ learns an unknown regular language $U$ by iteratively constructing a minimal DFA $\mathcal{A}$, using responses of membership and equivalence queries, such that $L(\mathcal{A})$ converges to $U$ throughout the process.


The \emph{Regular Positive and Negative Inference} (RPNI) algorithm~\cite{oncina_identifying_1992} is a polynomial-time passive learning algorithm for the identification of a DFA consistent with a given dataset of positive words $S^+$, and negative words $S^-$. The algorithm constructs a small automaton that accept all $S^+$ members, and rejects all $S^-$ members.
For a detailed description of the algorithm, see~\cite{de_la_higuera_grammatical_2010}.

\section{Formal Setting of the Problem}
\label{sec:setting}

We propose using a test model \( T \) as a regular language that defines the test space. Testers are expected to implement a testing infrastructure capable of executing tests (words) in \( T \) and classifying them as either \emph{passed}, \emph{failed}, or \emph{invalid} (i.e., tests that cannot be executed). In this general framework, each letter in a word represents a test step, which may involve context-dependent validations. The infrastructure handles implementation details, such as mitigating stochastic system behavior by executing tests multiple times. An example of such an infrastructure is described by Bar-Sinai et al.~\cite{bar-sinai_provengo_2023}.

We use $S \subseteq T$ to denote the set of tests the system can execute. The set of failed tests, called bugs, is denoted as $B \subseteq S$. We assume that $B$ is extension-closed with respect to $S$, representing the notion that once the system reaches an error state, any continuation will also be erroneous. This assumption is common in testing, as failures typically persist and continue to affect the system.

\begin{assumption}  
\label{ass:test-model}  
We assume that \( T \), \( S \), and \( B \) satisfy:  
\begin{itemize}  
   \item \( T \subseteq \Sigma^* \) is a regular language over the alphabet \( \Sigma \).  
   \item For any \( w \in T \), we can execute \( w \) and determine whether \( w \in S \), and if so, whether \( w \in B \).  
   \item $B$ is extension closed with respect to $S$, meaning that if \( w \in B \) and \( wu \in S \), then \( wu \in B \).
\end{itemize}  
\end{assumption}

It is important to recognize that while we use methods to learn regular languages, bug descriptions themselves are often not regular. Real-world errors may involve non-regular features such as counting or complex data dependencies. 
Our proposed techniques handle such cases by producing regular languages that approximate the bug and include test sequences that are not strictly feasible in practice.  
Therefore, we manage to employ formal means to cope with arbitrary languages. This allows for a practical balance between expressiveness and tractability in bug detection.



Typically, the test space omits possible interactions with the SUT, reflecting the testers' focus on a specific subsystem or functionality. The test model is often designed to target a particular area of interest, such as a recently reported bug or a critical feature requiring thorough validation. By constraining the test space in this way, testers can efficiently allocate resources and increase the likelihood of detecting relevant issues while maintaining a manageable scope for analysis. At the same time, the test space may over-approximate the possible interactions as it may include sequences that are not executable. Formally, $S$ may be a strict subset of $T$. This flexibility enables users to ignore low-level design details, such as system initialization or invalid calls in certain configurations.

The test space, represented as a single automaton, can quickly become very large, even based on simple specifications, as we will show in \autoref{sec:example}. Bar Sinai et al.~\cite{bar-sinai_provengo_2023} proposed more effective methods for describing this space. Since this paper focuses on extracting bug descriptions from the failing tests, we do not elaborate on breaking the test model into components here. In practice, the two methods can and should be combined. Furthermore, we demonstrate in \autoref{sec:evaluation} that our method does not require the explicit construction of $T$.

%

\section{A Running Example: Testing an Online Store}
\label{sec:example}

Our running example is based on an experiment conducted using a real online store system, with some simplifications for clarity and conciseness~\cite{bar-sinai_provengo_2023}
Testers interact with the system through both the REST API and the UI, using tools such as Selenium~\cite{8117878} or Playwright~\cite{10714759} to simulate user behavior. The key test actions include `\StartSession', `\Login', `\Logout', `\AddToCart', `\RemoveFromCart', and `\Checkout'. Each action takes parameters that may reference objects, such as product IDs or user credentials, created in previous steps.

A typical test flow involves multiple concurrent sessions, with users potentially logging into multiple sessions simultaneously. Within each session, users can add or remove products from their cart and attempt to check out. The test suite verifies both positive scenarios (e.g., successful checkout with valid cart contents) and negative ones (attempting operations that should fail, e.g., removing non-existent items or checking out with an empty cart). Each test validates whether the system correctly accepts or rejects the operation and checks for data consistency, such as ensuring that the checkout total reflects all cart modifications.


The model we use in this paper is based on the automaton shown in \autoref{fig:test-space} with the alphabet:
\begin{align*}
\Sigma = \bigcup_{i=1}^{N} \bigcup_{u \in U} \bigcup_{p \in P}  \{ &\StartSession_i, \Login_i(u),  \Logout_i \\ & \AddToCart_i(p), \RemoveFromCart_i(p), \Checkout_i\},
\end{align*}
where $U$ is a set of users and $P$ is a set of products.

This automaton represents the main flow of an online session $i$, with transitions corresponding to the basic shopping actions. 
$u$ and $p$ denote arbitrary values drawn from pre-determined finite sets during test execution. 
Multiple such automata run in parallel, following a standard composition approach where at each step, one automaton transitions while the others remain in their current states. A run is accepted if at least one automaton reaches an accepting state. 

For example, the following test simulates two sessions interacting simultaneously with the system:
$$S_1 L_1 A_1 R_1 A_1 S_2 L_2 C_2 C_1 \in A(1) || A(2)$$
where
$S_i$ denotes $\StartSession_i$;
$L_i$ denotes $\Login_i(u)$ for some user $u$;
$A_i$ and $R_i$ denote $\AddToCart_i(p)$ and $\RemoveFromCart_i(p)$ for some product $p$, respectively;
and $C_i$ denotes $\Checkout_i$.
Here, $\mathcal{A}(i)$ represents the automaton for the $i$th session. This test demonstrates a scenario where a user logs into two sessions simultaneously. 

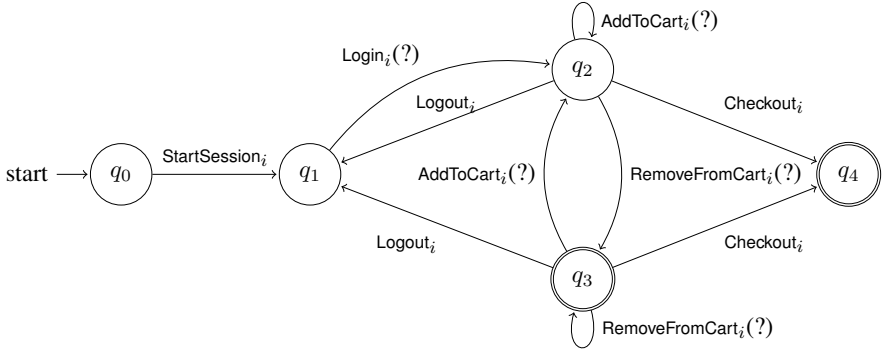
\begin{figure*}[h]
  \begin{minipage}[c]{0.53\textwidth}
    \centering
    \scalebox{0.93}{
        \begin{tikzpicture}[shorten >=1pt, node distance=3cm, on grid, auto] 
           \node[state,initial] (q_0)   {$q_0$}; 
           \node[state] (q_1) [right= 2.7cm of q_0] {$q_1$}; 
           \node[state] (q_2) [above right=1.5cm and 3.9cm of q_1] {$q_2$}; 
           \node[state,accepting] (q_3) [below right=1.5cm and 3.9cm of q_1] {$q_{3}$}; 
           \node[state,accepting] (q_4) [right=7.7cm of q_1] {$q_4$}; 
           \node[draw=none, fill=none, inner sep=0pt] (txtq2) [right=1.1cm of q_2, yshift=0.7cm] {$\AddToCart_i$(?)};
           \node[draw=none, fill=none, inner sep=0pt] (txtq3) [right=1.5cm of q_3, yshift=-0.7cm] {$\RemoveFromCart_i$(?)};
           
           \path[->] 
            (q_0) edge node {$\StartSession_i$} (q_1)
            (q_1) edge[bend left] node {$\Login_i$(?)} (q_2)
            (q_2) edge node[above] {$\Logout_i$} (q_1)
            (q_3) edge node {$\Logout_i$} (q_1) 
            (q_2) edge[loop above] node {} ()
            (q_2) edge[bend left] node {$\RemoveFromCart_i$(?)} (q_3)
            (q_3) edge[loop below] node {} ()
            (q_3) edge[bend left] node {$\AddToCart_i$(?)} (q_2)
    
            (q_2) edge node {$\Checkout_i$} (q_4)
            (q_3) edge node[below right] {$\Checkout_i$} (q_4);
        \end{tikzpicture}
     }
  \end{minipage}\hfill
  \begin{minipage}[c]{0.32\textwidth}
    \caption{An automaton representing the main flow of an online session $i$.
    The question marks denote arbitrary parameters: user $u \in U$ for the login and product $p \in P$ for the add and remove, where $P$ and $U$ are finite sets. 
    Multiple such automata run in parallel, following a standard composition approach in which, at each step, one automaton transitions while the others remain in their current states. A run is accepted if at least one of the automata is in an accepting state.}
    \label{fig:test-space}
  \end{minipage}
\end{figure*}

This relatively simple domain already demonstrates the complexity of test generation. The interactions between multiple sessions, stateful operations, and data dependencies create a rich space for potential bugs. Model-based testing is well-established for such complex systems~\cite{utting2012practical}, where finite automata describe possible action sequences. While these models are traditionally used to generate tests and measure coverage~\cite{lee1996principles,tretmans2008model}, we focus on using them to extract concise bug descriptions. We assume that testers have an automaton or equivalent regular language representation from which tests are extracted.

While our approach models tests as words over a finite alphabet, testing typically involves interacting with a system and observing its responses. This leads to a natural question: How do we represent the system's expected reactions? In our framework, the model encodes only high-level testing instructions, which are then interpreted by a testing infrastructure that mediates system interaction. This infrastructure executes commands and classifies test results as passed, failed, or invalid based on system responses. For example, when the infrastructure receives the word $S_1 L_1 A_1 R_1 A_1 S_2 L_2 C_2 C_1$, it executes each action sequentially, validates responses after each step, checks specific assertions like empty cart errors, and classifies the test outcome.
If expected behaviors are not observed, such as an error message appearing at the wrong time or not appearing when it should, the test is marked as failed. This separation of concerns allows us to focus on test generation and bug detection patterns while delegating response validation to the testing framework. The framework handles the complexities of system interaction and assertion checking, letting us concentrate on identifying and explaining bug patterns.

\section{Failure Explanation}
\label{sec:fe}
Since we aim to learn the language of failing tests, a natural approach is to learn the language $B$. However, while $B$ fully characterizes which tests fail, it often provides limited value to developers. The reason is that $B$ often includes considerable superfluous information that obscures the core reason for failure, as the following example demonstrates:

\begin{example} 
\label{ex:B}
Consider a simple bug in the online store where every time we try to remove a certain product, $x$, from the cart after adding it twice, the system crashes. A simple description of this bug would be a language $E$ defined by $\Sigma^* \AddToCart_{?}(x) \Sigma^* \AddToCart_{?}(x) \Sigma^* \RemoveFromCart_{?}(x)$. However, the language $B = E \cap S$ contains additional constraints from $S$, such as proper session initialization, login requirements, and checkout validation rules. These constraints, while important for valid test sequences, obscure the core reason for failure, the double-add followed by a remove pattern. Minimal DFAs for $E$ and $B$ are given in Appendix A in~\cite{supplementarymaterial}.
\end{example}

Therefore, rather than attempting to learn $B$ exactly, we seek an approach that can factor out irrelevant details while preserving the essential patterns that distinguish failing tests from passing ones. Simply specifying the failing tests $B$ as the target language to be learned is too restrictive --- it forces us to include all the constraints from $T$ even when they are irrelevant to understanding the bug.

Instead, we propose a more flexible formulation by separating what must be accepted (failing tests, $B$) from what must not be accepted (passing tests, $S\setminus B$). This separation gives us the freedom to abstract away irrelevant constraints while still correctly classifying all test cases. The remaining sequences in $\Sigma^* \setminus S$ can be treated as ``don't care'' cases.

To formalize our approach, we first introduce the notion of constraint specifications~\cite{chen_learning_2009}:
\begin{definition}
\label{def:consistency}
A \emph{constraint specification} is a pair $(A,R)$ where $A,R\subseteq \Sigma^*$ and $A\cap R=\emptyset$. A language $L$ is said to be \emph{consistent with $(A,R)$} if $A\subseteq L$ and $L\cap R = \emptyset$.
\end{definition}

Intuitively, $A$ represents words that must be accepted (positive examples), and $R$ represents those that must not be accepted (negative examples). Any sequence in $\Sigma^* \setminus (A\cup R)$ is considered a ``don't care'' word --- it may be either accepted or rejected based on what leads to a simpler bug description.

\begin{definition}
\label{def:fe}
$L$ is called a \emph{failure explanation} (FE) for $S$ and $B$ if it is consistent with $(B, S\setminus B)$.
\end{definition}

This definition formalizes our approach through three complementary components. First, a constraint specification defines what sequences must be accepted or rejected, establishing the basic boundaries of the problem. Second, consistency ensures a language respects these constraints, providing a mechanism for validation. Third, a failure explanation specifically applies consistency to the constraint specification derived from the set of tests the system can
execute $S$ and bug set $B$, focusing the analysis on the relevant test outcomes. Together, this framework allows us to characterize bug patterns while abstracting away irrelevant details precisely. For illustration, in \autoref{ex:B}, $E$ is a failure explanation (FE) for $S$ and $B$. 

To formalize the benefits of using failure explanations, we next analyze the state complexity of automata that represent languages. 
Our choice of focusing on the size of the minimal DFA is motivated by the fact that smaller automata are generally easier to understand and analyze. 
For a language $L$, let $\mathcal{L}$ denote the minimal DFA (i.e., the DFA with the fewest states) that accepts $L$. We use $|\mathcal{L}|$ to represent the number of states in this minimal DFA.

The following lemma shows that a minimal failure explanation can only smaller than the minimal DFA for the bug set $B$:
\begin{lemma}
Let $\mathcal{B}$ be a minimal DFA accepting $B$ and $\mathcal{E}$ be a minimal DFA such that $E$ is a failure explanation, then $|\mathcal{E}| \leq |\mathcal{B}|$.
\end{lemma}
\begin{proof}
Since $B$ is also an FE for $T$ and $B$, any minimal FE DFA for $T$ and $B$ is smaller than the minimal DFA for $B$. 
\end{proof}

Note that explanations can meet \autoref{def:fe} yet remain unhelpful, much like a textbook that conveys accurate information in a way that readers find difficult to understand. The following example illustrates this:
\begin{example}
\label{ex:E}
Consider an automaton for the bug described in \autoref{ex:B} that recognizes all words containing more occurrences of $\Login_?(?)$ than $\Logout_?$ (with some bounded counter to maintain regularity) that also satisfy the pattern $\Sigma^* \AddToCart_{?}(x) \Sigma^* \AddToCart_{?}(x) \Sigma^* \RemoveFromCart_{?}(x)$. Although this automaton meets the criteria for an error explanation, accepting all words in $B$ and rejecting all words in $S \setminus B$, it is not a good explanation, as it imposes the above constraints that are unrelated to the actual bug. 

\end{example}

This example illustrates that a DFA satisfying the consistency requirements can be highly complex. However, as demonstrated in \autoref{ex:B}, the flexibility to accept words outside \( T \) enables us to pursue more general and meaningful explanations rather than merely learning \( B \).

\section{Eventual Failure Explanation}
\label{sec:efe}
Failure explanations focus on failing tests, but this strict definition can sometimes obscure the true source of a bug. The following example illustrates this point:

\begin{example}
\label{ex:efe}
Consider a bug in the online store case study: If product \( x \) is added and removed from the cart, an incorrect error message appears at checkout. 
By \autoref{def:fe}, the language \( \Sigma^* \AddToCart_?(x) \Sigma^* \RemoveFromCart_?(x) \) is not a failure explanation, as it must accept a word only after \( \Checkout \). Consequently, failure explanations fail to direct bug hunters to examine the faulty \( \RemoveFromCart \) action. 
\end{example}

This example suggests that limiting failure explanations to failing tests alone may be insufficient. Allowing certain passing tests to be marked as faulty can provide additional insights into the root cause of an assertion violation, ultimately leading to more effective debugging.

We propose allowing explanations to accept passing tests when they are bound to eventually fail. This perspective aligns with the previous example, where the faulty $\RemoveFromCart$ action is causing the failure, even if it does not immediately trigger an assertion violation. By recognizing such cases, we can provide more informative failure explanations that better guide debugging efforts.  

For simplicity, we first define the concept of \emph{may pass} and then introduce the main concept of \emph{eventually fail} as its negation, as follows:

\begin{definition}
\label{def:may-pass}
    A test prefix \( w \in \Pref(S) \) \emph{may pass} if one of the following holds:
    \begin{enumerate}
        \item There exists \( u \) such that \( wu \in S \setminus B \), and for all \( v \), \( wuv \notin S \). In words, \( w \) can be extended to a passing test that cannot be further extended into a longer test.
        \item There exists a sequence of \( u_1, u_2, \dots \), where for each \( i = 1, 2, \dots \), \( u_i \neq \varepsilon \) and \( wu_1u_2\cdots u_i \in S \setminus B \).  In words, there is an infinite continuation of $w$ that infinitely often is a passing test.
    \end{enumerate}
    We say that \( w \) \emph{will eventually fail} if \( w \) may not pass.
\end{definition}

Using the notion of an eventually failing test, we can now define eventual failure explanation (EFE) languages:

\begin{definition}
\label{def:eventual-failure-explanation}
    Let \( \lozenge B \) denote the set of all tests in \( T \) that will eventually fail.\footnote{\(\lozenge\) is the linear temporal logic (LTL) \emph{eventually} operator, and is thus ``borrowed" here to represent tests that will eventually fail.} A language \( L \) is an eventual failure explanation (EFE) if it is consistent with \( (B, T \setminus \lozenge B) \).
\end{definition}

Comparing the above definition with \autoref{def:fe}, we observe that the eventual failure explanation (EFE) is looser. While a failure explanation (FE) language must reject all passing tests, an EFE offers more flexibility: it may accept a passing test if it will eventually fail. Therefore, FE emphasizes \emph{which} tests will fail, as it accepts only failing tests and non-tests. The looser language EFE emphasizes \emph{when} tests will fail, as it may accept a passing test $w$ if $w\in \lozenge B$. \autoref{ex:efe} illustrates this observation. We remark that EFE is \emph{not} a design choice to replace FE. Engineers can use both to gain complementary insights about the bugs they hunt.

Another perspective on EFE can be provided by the \emph{Reachability, Infection, Propagation, and Revealability} (RIPR) model of software testing~\cite{ammann2017introduction}. In this context, an EFE exploits situations where propagation to the fault-revealing location is guaranteed.

\section{Minimal Extension Closed Explanation}

The challenge is to select a failure or eventual failure explanation from the many languages satisfying \autoref{def:fe} and \autoref{def:eventual-failure-explanation}. A natural choice is the explanation with the smallest automaton. Also, we prefer extension closed explanations as they simplify reasoning by ensuring failures remain failures under extension, preventing transitions from accepting to non-accepting states.

Recall that we assumed \( B \) is extension closed with respect to \( S \), as failures from assertion violations persist under extension (\autoref{ass:test-model}).  
The following proposition shows that this assumption ensures that extension closedness can be enforced without compromising minimality. In other words, we can construct the smallest possible automaton recognizing an extension closed FE/EFE language while preserving this property.

\begin{proposition} 
If \( B \) is extension closed with respect to \( S \) (i.e., if \( w \in B \) and \( wu \in S \), then \( wu \in B \)), then there exists an extension closed FE/EFE language whose minimal automaton is the smallest among all automata recognizing FE/EFE languages, respectively. 
\end{proposition}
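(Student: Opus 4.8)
The plan is to show that requiring extension closedness costs nothing, by turning a minimal FE into an extension closed FE with no larger automaton. Write $k$ for the least number of states of any DFA recognizing an FE language, witnessed by a minimal FE language $A$ with minimal DFA $\mathcal{A}=(Q,\Sigma,q_0,\delta,F)$, $|Q|=k$. Since every extension closed FE is in particular an FE, the minimal extension closed FE automaton has at least $k$ states; so it suffices to exhibit \emph{one} extension closed FE whose minimal DFA has at most $k$ states, and then conclude equality. The EFE case will be identical once I check that $\lozenge B$ is itself extension closed with respect to $T$ and contains $B$ (both follow directly from \autoref{def:may-pass}, as a continuation of a doomed prefix is doomed and a failing test cannot pass), letting $\lozenge B$ and $T\setminus\lozenge B$ play the roles of $B$ and $S\setminus B$.

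First I would record the single place where the hypothesis is used. From $B$ being extension closed with respect to $S$ I would derive $B\Sigma^{*}\cap\Pref(S\setminus B)=\emptyset$: if $v\in B$ were a prefix of some $w\in S\setminus B$, then $w=vu\in S$ with $v\in B$ would force $w\in B$ by the assumption, contradicting $w\in S\setminus B$. This disjointness is exactly what makes the accept side (which, under closure, must contain all of $B\Sigma^{*}$) and the reject side (which, under closure, must avoid all of $\Pref(S\setminus B)$) compatible, so that the extension closed FEs are precisely the extension closed languages sandwiched between $B\Sigma^{*}$ and $\Sigma^{*}\setminus\Pref(S\setminus B)$.

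Next I would take the candidate $A'=A\Sigma^{*}$. It is extension closed by construction and contains $B$, since $B\subseteq A\subseteq A\Sigma^{*}$. For the size bound I would use that extension closure never increases the state count: from $\mathcal{A}$ build a DFA for $A\Sigma^{*}$ by redirecting every transition that would enter an accepting state into a single fresh absorbing accepting sink and discarding the old accepting states, yielding $|Q|-|F|+1\le k$ states (the degenerate case $F=\emptyset$ forces $B=\emptyset$ and is trivial). Hence $|\mathcal{A'}|\le k$, and once $A'$ is shown to be an FE the proposition follows.

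The main obstacle is the rejection side: proving $A'\cap(S\setminus B)=\emptyset$, i.e.\ that no passing test acquires an accepted prefix after closing under extension. If $w\in S\setminus B$ had a prefix $v\in A$, then $v\notin S\setminus B$ (as $A$ is an FE) and $v\notin B$ (by the disjointness above, lest $w\in B$); the residual danger is a ``don't-care'' prefix $v\in A\setminus S$ that is a proper prefix of $w$, which cannot be ruled out for an \emph{arbitrary} minimal FE. The delicate step is therefore to first replace $A$ by $A\cap(\Sigma^{*}\setminus\Pref(S\setminus B))$, which is still an FE and avoids all such prefixes, and then show this trimming does not enlarge the minimal automaton --- equivalently, that the FE minimum is attained by some FE contained in the largest extension closed FE $\Sigma^{*}\setminus\Pref(S\setminus B)$. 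Establishing that state count is preserved under this intersection, via a Myhill--Nerode domination of the residuals of the trimmed language by those of $A$ combined with the sink-merging above, is where I expect the real work to lie.
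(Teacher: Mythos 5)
The constructive core of your proposal coincides with the paper's own proof sketch: the paper takes a minimal FE DFA $\mathcal{A}$ and makes every accepting state absorbing, which recognizes exactly your candidate $A\Sigma^*$, and, like you, it appeals to extension closedness of $B$ with respect to $S$ to argue consistency. Where you go beyond the paper is your ``main obstacle'' paragraph, and you are right to flag it: the paper's consistency argument only examines extensions of words $w\in B$ that reach accepting states, but in a minimal FE DFA an accepting state can also be reached by a ``don't care'' word $v\in A\setminus S$, and an extension of such $v$ may lie in $S\setminus B$, in which case $A\Sigma^*$ accepts a passing test and is not an FE. So the delicate step you isolate is not pedantry on your part; it is a genuine hole in the paper's proof sketch, which silently assumes accepting states are only reached by words of $B$.

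Unfortunately, the step you defer --- showing that replacing $A$ by $A\cap(\Sigma^*\setminus\Pref(S\setminus B))$ does not increase the minimal DFA size --- is not merely unfinished: it is impossible in general, because the proposition is false as stated. Consider $\Sigma=\{a,b\}$, $T=\Sigma^*$, $S=\{b,ab\}$, $B=\{b\}$. Then $B$ is extension closed with respect to $S$, since the only extension of $b$ lying in $S$ is $b$ itself. The set of odd-length words is an FE (it contains $b$ and omits $ab$), and its minimal DFA has two states. On the other hand, the only extension closed languages over $\{a,b\}$ recognized by a (total, as in the paper's definition) DFA with at most two states are $\emptyset$, $\Sigma^*$, $\Sigma^+$, $a^*b\Sigma^*$, and $b^*a\Sigma^*$, and none of these contains $b$ while omitting $ab$; the smallest extension closed FE here, e.g.\ $b\Sigma^*$, needs three states. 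Hence no extension closed FE attains the unconstrained FE minimum, which refutes the FE half of the proposition and with it both the paper's argument and any completion of your plan. What your write-up does establish correctly is the weaker claim that extension closure is free \emph{among FEs that already avoid $\Pref(S\setminus B)$}: your sandwich characterization together with the sink construction shows that if $A\cap\Pref(S\setminus B)=\emptyset$, then $A\Sigma^*$ is an extension closed FE with at most as many states as $A$. The counterexample shows that this restricted minimum (three states) can sit strictly above the unrestricted one (two states), which is precisely the quantity your final Myhill--Nerode step would have to control --- and cannot.
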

\begin{proof}[Proof sketch]
Let \(\mathcal{A}\) be a minimal FE DFA for  some \( S \) and \( B \). We construct a new DFA by modifying \(\mathcal{A}\) such that every transition from an accepting state transitions back to itself. This transformation ensures that the resulting DFA is extension closed. The resulting DFA still satisfies the FE consistency condition defined in \autoref{def:fe}, since by assumption, \( B \) is extension closed with respect to \( S \): if a word \( w \in B \) reaches an accepting state, then any extension $wu$, can only be in $B$ or not in $S$. Therefore, the modified DFA does not violate the consistency condition. This transformation does not increase the number of states, so the resulting DFA remains minimal. A similar argument applies for a minimal EFE DFA, using the consistency condition in \autoref{def:eventual-failure-explanation}.  
\end{proof}

This proposition is a practical heuristic that gives a criterion for selecting among all possible FE and EFE languages.

\section{Early Detection}
\label{sec:early-detection}


Since computing a minimal automaton is challenging (see \autoref{sec:discussion} for a brief account of our attempts), we propose restricting the search space by imposing additional constraints on the language. The following example illustrates the use of such constraints:

\begin{example}  
Consider a bug that manifests as an unexpected error message during product removal or checkout, occurring when a user logs out and then logs back in after adding a product to the cart. A good failure explanation is given by the language \(\Sigma^* \AddToCart_i(?) \Sigma^* \Logout_i \Sigma^*\), which serves as both an FE and an EFE. Its key advantage over other FEs and EFEs is that it directs testers to the earliest point where the failure can be detected, facilitating quicker diagnosis and debugging. Using only EFE does not necessarily lead to this language, as it leaves the classification open for words not in $S$.
\end{example}

This example demonstrates the importance of marking test prefixes that are destined to fail - when all their potential extensions into valid tests will also fail. This illustrates early detection in the context of failure explanations, and the principle also applies to eventual failure explanations. The key insight is that early detection should pinpoint problematic behavioral patterns at the earliest possible stage, before they result in actual failures. We formalize this concept for arbitrary constraint specifications $(A,R)$ by defining an early detection language as one that accepts words as soon as all their valid extensions intersect $A$ while remaining disjoint from $R$.

\begin{definition}
\label{def:early-detection}
    A language $L$ that is consistent with a language-requirements $(A,R)$ is called \emph{early detection} (ED), if for every $w$, if $w\in \Pref(A)$ and $w\notin \Pref(R)$, then $w\in L$.
\end{definition}

There is an equivalent formulation of early detection that provides deeper insight into its purpose: A language $L$ consistent with $(A,R)$ is early detection (ED) if and only if it is consistent with $(\text{Pref}(A)\setminus\text{Pref}(R), R)$. This alternative formulation reveals two key aspects of early detection:
\begin{enumerate*}[label=\arabic*)]
    \item By accepting words in $\Pref(A)\setminus\Pref(R)$, we identify problematic patterns as soon as they appear in the execution trace, before they manifest as actual failures.
    \item By maintaining consistency with $R$, we ensure that the explanation remains sound - it never misclassifies known passing tests.
\end{enumerate*}

In the remainder of the paper we use EDFE and EDEFE as abbreviations for early detection failure explanation and early detection eventual failure explanation, respectively. These concepts combine the benefits of early detection with our previous notions of failure explanations, allowing us to both identify problematic patterns early and maintain the appropriate level of abstraction.

A natural question that arises is whether the early detection constraint preserves minimality, as in the case of extension closeness. We show that this is not always true: there exist instances where a minimal FE (EFE) DFA has strictly fewer states than the corresponding minimal EDFE (EDEFE) DFA.

\begin{proposition}
A minimal FE (EFE) DFA can be strictly smaller than an EDFE (EDEFE) DFA. 
\end{proposition}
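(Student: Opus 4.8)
The statement is existential, so the plan is to exhibit a single explicit instance $(\Sigma, T, S, B)$ satisfying \autoref{ass:test-model} for which the minimal FE DFA is strictly smaller than every EDFE DFA (the EFE/EDEFE case being analogous). The strategy has two halves: an \emph{upper bound} on the minimal FE, obtained by displaying a small consistent automaton, and a \emph{lower bound} on every EDFE, obtained from the equivalent characterization stating that an EDFE must be consistent with $(\Pref(B)\setminus\Pref(S\setminus B),\, S\setminus B)$. The key design idea is to make actual failure cheap to detect but ``being doomed'' (i.e.\ lying in $\Pref(B)\setminus\Pref(S\setminus B)$) expensive to detect, so that early detection is forced to recognize a strictly harder language than $B$ itself.

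Concretely, I would take $\Sigma=\{a,e,\#\}$ and design a system in which a test presses $a$ some number of times, attempts to exit with $e$, and may later fail by performing $\#$. Exiting is a \emph{passing} test exactly when the number of $a$'s is even, so $S\setminus B=\{a^{2k}e : k\ge 0\}$; exiting after an odd number of $a$'s sends the system into a trap from which the only terminating behaviour is failure via $\#$, so the failing tests are $B=\{a^{2k+1}e\,w\,\# v : k\ge 0,\ w\in\{a,e\}^*,\ v\in\Sigma^*\}$, with $T=\Sigma^*$. One checks that $B$ is extension closed with respect to $S$. The crucial point is that every odd-exit prefix $a^{2k+1}e$ lies in $\Pref(B)\setminus\Pref(S\setminus B)$ (it can reach a $\#$-failure but is not a prefix of any even-exit pass), whereas every even-exit test $a^{2k}e$ is a forced reject. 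Hence any EDFE must accept $a^{2k+1}e$ and reject $a^{2k}e$, i.e.\ it must decide the parity of the $a$-block \emph{at the moment $e$ is read}, before any $\#$ appears.

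For the lower bound I would exhibit a three-element fooling set, e.g.\ $\{\varepsilon,\ a,\ ae\}$, and verify that each pair is separated by a suffix that drives one string into $\Pref(B)\setminus\Pref(S\setminus B)$ and the other into $S\setminus B$: the suffix $e$ separates $\varepsilon$ from $a$ and from $ae$, and the suffix $ae$ separates $a$ from $ae$. Since pairwise distinguishability forces distinct states in \emph{any} consistent DFA (this lower bound is valid even though the specification is only partial), every EDFE has at least three states. For the upper bound I would note that the language ``contains $\#$'' $=\Sigma^*\#\Sigma^*$ is a failure explanation: it accepts all of $B$ and rejects every $a^{2k}e$, and its minimal DFA has two states because it ignores parity entirely. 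Therefore the minimal FE has two states while every EDFE has at least three, proving the claim; the EFE/EDEFE version follows from the same construction after recomputing the eventual-failure set $\lozenge B$ and its prefixes.

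The main obstacle is conceptual rather than computational: naive instances collapse because a doomed prefix never needs to be distinguished from a \emph{passing} continuation (extension closedness guarantees all of its continuations fail), so one can always route doomed prefixes to an accepting sink and reuse the states of the minimal FE, yielding no gap. The construction above circumvents this by decoupling the \emph{failure trigger} (a single marker $\#$, recognizable with two states and with no lookahead) from the \emph{doom condition} (parity of the $a$-block, which an early-detection language must resolve the instant $e$ is read). Verifying that this decoupling indeed forces an extra state --- rather than being absorbed elsewhere --- is exactly what the fooling-set computation delivers, and checking the extension-closedness hypothesis together with the membership of each fooling-set witness in the appropriate forced set are the remaining routine details.
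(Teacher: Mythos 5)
Your proof of the FE-versus-EDFE half is correct, and it is structurally the same kind of argument as the paper's: both exhibit a single concrete instance and compare automaton sizes. The paper takes $\Sigma=\{0,1\}$, $T=\Sigma^*$, $S = 00\Sigma^* + 1\Sigma^*$, $B = 00\Sigma^* + 1(1\Sigma)^*0\Sigma^*$, and simply displays a 3-state minimal FE/EFE next to a 4-state minimal EDFE/EDEFE, leaving all minimality checks implicit. Your instance is different and in one respect better documented: decoupling the failure trigger ($\#$) from the doom condition (parity at the moment $e$ is read) gives a 2-state FE $\Sigma^*\#\Sigma^*$, and your fooling set $\{\varepsilon,a,ae\}$ with separating suffixes $e$, $e$, $ae$ is an actual lower-bound proof that any DFA consistent with $(\Pref(B)\setminus\Pref(S\setminus B),\,S\setminus B)$ needs at least three states. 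I checked the forced sets: $\Pref(B)\setminus\Pref(S\setminus B) = a(aa)^*e\Sigma^*$, the forced rejects are exactly the words $a^{2k}e$, all three separations are sound, and $B$ is extension closed as required. So for FE versus EDFE your argument is complete, and arguably more explicit than the paper's.

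The gap is the parenthetical EFE/EDEFE claim, which you dispatch with ``follows from the same construction after recomputing $\lozenge B$.'' That recomputation is exactly where your example is delicate. In your instance the passing tests $a^{2k}e$ are maximal in $S$, so the may-pass prefixes are precisely $\{a^i\}\cup\{a^{2k}e\}$; every other word of $\Sigma^*$ --- including $\#$ itself and every $\#$-containing word outside $B$ --- is not a prefix of any executable test. Whether such words land in $\lozenge B$ (hence are don't-cares for an EFE) or in $T\setminus\lozenge B$ (hence forced rejects) depends on how \autoref{def:eventual-failure-explanation} is read for words outside $\Pref(S)$. Under the reading the paper itself must intend (a word with no executable extension cannot may-pass, so it eventually fails and is not a forced reject --- otherwise the paper's own 3-state automaton, which accepts non-tests such as $0100$, would not be an EFE), your $\Sigma^*\#\Sigma^*$ remains an EFE, the EDEFE forced accepts are again $a(aa)^*e\Sigma^*$, and the same fooling set gives EDEFE $\geq 3$, so your claim survives. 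Under the stricter reading where $\lozenge B\subseteq\Pref(S)$ and everything else is a forced reject, however, $\Sigma^*\#\Sigma^*$ accepts the forced-reject word $\#$ and is not an EFE at all; the minimal EFE is then compelled to track parity, and in your instance the minimal EFE and the (unique) EDEFE both have four states, so the gap vanishes entirely. The EFE/EDEFE half is therefore not a routine corollary of your construction: it needs the explicit computation of $\lozenge B$, the check that your two-state automaton rejects all may-pass words, and a rerun of the fooling set against the EDEFE forced sets, none of which appears in your write-up.
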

\begin{proof}
For $\Sigma = \{ 0,1 \}$, $T = \Sigma^*$, 
    $S = 00\Sigma^* + 1\Sigma^*$, 
    and $B = 00\Sigma^* + 1(1\Sigma)^*0\Sigma^*$:\phantom{\qedhere}

{\small Minimal FE \& EFE:}
\scalebox{0.7}{
\begin{tikzpicture}[baseline=-0.85ex,shorten >=1pt,node distance=3cm,on grid,auto, initial text=] 
    \node[state,initial] (s) {$s$}; 
    \node[state] (q) [right=of s] {$q$}; 
    \node[state,accepting] (r) [right=2cm of q] {$r$}; 
    \path[->] 
    (s) 
        edge [bend left] node {0, 1} (q)
    (q) edge [bend left] node {1} (s)
        edge [] node {0} (r)
    (r) edge [loop right] node {0, 1} (r);
\end{tikzpicture}
}

{\small Minimal EDFE \& EDEFE:}
\scalebox{0.7}{
\begin{tikzpicture}[baseline=-0.85ex, shorten >=1pt,node distance=3cm,on grid,auto, initial text=] 
    \node[state,initial] (s) {$s$}; 
    \node[state] (q1) [below=1.7cm of s] {$q_1$}; 
    \node[state] (q2) [right=of q1] {$q_2$}; 
    \node[state,accepting] (r) [right=of s] {$r$}; 
    \node[draw=none, fill=none, inner sep=0pt] (dummy) [right=2.0cm of q2] {\qed};
    \path[->] 
    (s) edge [] node {0} (r)

        edge [] node {1} (q1)
    (q1) edge [] node {0} (r)
         edge [] node {1} (q2)
    (q2) edge [bend left] node {0, 1} (q1)
    (r) edge [loop right] node {0, 1} (r);
\end{tikzpicture} 
}

\end{proof}

We remark that by enlarging the alphabet size, for all $n\in \mathbb{N}$, we can extend this example into one in which the difference in the number of states is greater than $n$.
Our experiments, detailed later in \autoref{sec:evaluation}, show no statistically significant advantage of early detection in general. However, we observed specific cases where early detection resulted in a significantly more compact representation.


\section{Algorithms for Finding Error Explanations and Eventual Failure Explanations}
\label{sec:algorithm}

To find error explanations, we propose an algorithm that we elaborate on in this section. The first stage involves finding a \emph{3-valued DFA (3DFA)} with an $L^*$ algorithm using the concepts presented in~\cite{leucker_learning_2012}. A 3DFA $\mathcal{A} = \tuple{Q,\Sigma,q_0,\delta,Acc,Rej,Dont}$ is a finite automaton whose states are partitioned into three labels: accepting, rejecting, and ``don't care''. In our context, we look for a 3DFA to represent the relevant information on our investigated SUT:

\begin{definition}
Let $\mathcal{A} = \tuple{Q,\Sigma,q_0,\delta,Acc,Rej,Dont}$ be a 3DFA and let
$\mathcal{A}^{C}=\tuple{Q,\Sigma,q_0,\delta,C}$ be a DFA for each $C \in \{ Acc,Rej,Dont\}$. 
Let $T$, $S$, and $B$ be as in \autoref{ass:test-model}.
We say that a 3DFA \emph{captures} these languages if $L(\mathcal{A}^{Acc}) = B$ and $L(\mathcal{A}^{Rej}) = S \setminus B$.
\end{definition}



{\bfseries L* for 3DFA.} To obtain a 3DFA that captures our system, we propose a teacher for the $L^*$ algorithm. We assume our teacher has access to the test model $T$, and to an optional initial test repository consisting of tests for which the system output is already known. This approach facilitates a scenario where the developer already has a dataset of test cases that initially led to the discovery of the bug. The teacher will initially attempt to respond to queries using the model $T$ and the existing tests, in order to reduce the number of real executions on the SUT, which can be prohibitively expensive.

For membership queries, given $w = \Sigma^*$, the teacher first checks if $w$ is already in the test repository and if it is, responds accordingly. Otherwise, if $w \notin T$ the teacher answers $Dont$. Finally, if $w \in T$, the teacher runs $w$ on the SUT and answers $Acc$ if $w \in B$, $Dont$ if $w \notin S$, and $Rej$ otherwise.

The equivalence query process is outlined in \autoref{alg:equivalence}. To avoid unnecessary SUT executions, the algorithm first checks if a counterexample for the candidate 3DFA $\mathcal{A}$ already exists in the test repository. If not, the algorithm checks if $L(\mathcal{A}^{Acc \cup Rej}) \subseteq T$. If not, the teacher returns a counterexample $w \in L(\mathcal{A}^{Acc \cup Rej}) \setminus T$. Otherwise, the algorithm continues and checks if: 
\begin{enumerate*}[label=\arabic*)]
    \item $L(\mathcal{A}^{Acc}) = B$, 
    \item $L(\mathcal{A}^{Rej}) = S \setminus B$, and 
    \item $L(\mathcal{A}^{Dont}) = \Sigma^* \setminus S$.
\end{enumerate*}

Since we do not know $S$ and $B$ in advance, these checks are done heuristically through random interactions with the SUT~\cite{1702519,533956}. If a counterexample is found, the process finishes in any of the three steps. The reason we make three separate checks instead of one that interacts directly with $\mathcal{A}$ is to emphasize each of the three classes and reduce the effect of underrepresented classes. For example, finding a random counterexample from the class of bugs $Acc$, which is usually rare, is harder in $\mathcal{A}$ than in $\mathcal{A}^{Acc}$. Also, these three checks are independent and can be executed in parallel until the first counterexample is found. Note that the computation of $L(\mathcal{A}^{Acc \cup Rej}) \subseteq T$ can also be done heuristically. This approach provides more flexibility as it allows for the definition of a non-regular test-space $T$ or for reducing the computational complexity in case $T$ is large, as discussed in \autoref{sec:evaluation}.

\begin{algorithm}
\caption{Equivalence Query for Candidate 3DFA $A$}
\label{alg:equivalence}
\renewcommand{\algorithmicrequire}{\textbf{Input:}}
\begin{algorithmic}[1]
    \Require{$\mathcal{A}=\tuple{Q,\Sigma,q_0,\delta,Acc,Rej,Dont}$, $T$}
    \State{$cex \gets \text{get\_cex\_test\_repo}(\mathcal{A})$}
    \If{$cex \ne null$}
        \Return{$cex$} \EndIf
    \If{$L(\mathcal{A}^{Acc \cup Rej}) \not \subseteq T$}
        \Return{$w \text{ such that } w \in L(\mathcal{A}^{Acc \cup Rej}) \setminus T$} \EndIf
    \State{$cex \gets \text{get\_cex\_heuristically}(\mathcal{A}^{Acc})$}
    \If{$cex \ne null$}
        \Return{$cex$} \EndIf
    \State{$cex \gets \text{get\_cex\_heuristically}(\mathcal{A}^{Rej})$}
    \If{$cex \ne null$}
        \Return{$cex$} \EndIf
    \State{$cex \gets \text{get\_cex\_heuristically}(\mathcal{A}^{Dont})$}
    \If{$cex \ne null$}
        \Return{$cex$} \EndIf
    \Return{$null$}
\end{algorithmic}
\end{algorithm}

Once a 3DFA is computed, the algorithm proceeds to the second stage to compute an error explanation DFA. We assume that the 3DFA is minimal, because this is guaranteed by $L^*$. Depending on the type of explanation, the 3DFA might go through a relabeling process before the DFA computation, as described in the following paragraphs. 

{\bfseries 3DFA manipulation for EFE.} The relabeling process is presented in \autoref{alg:eventual-fe} for an eventual failure explanation (EFE). It starts by computing which states may pass based on the two conditions in \autoref{def:may-pass}. For the first condition, it computes which states can reach a $Rej$ state that transitions immediately to a sink $Dont$ state (lines 1-3). Since the 3DFA is minimal, there is only one such state. For the second condition, the algorithm finds strongly connected components in the 3DFA graph and then computes which states can reach a component that contains a $Rej$ state (lines 4-10). Finally (lines 11-13), it updates the 3DFA to reflect the consistency condition described in \autoref{def:eventual-failure-explanation} by adding the found states to $Dont$ and removing them from $Rej$.

\begin{algorithm}
\caption{Eventual FE Relabeling for 3DFA $A$}
\label{alg:eventual-fe}
\renewcommand{\algorithmicrequire}{\textbf{Input:}}
\begin{algorithmic}[1]
\Require{$A = \tuple{Q,\Sigma,q_0,\delta,Acc,Rej,Dont}$}
    \State{$sinkDont \gets \{ q \in Dont  \mid \forall \sigma \in \Sigma  (\delta(q,\sigma) = q) \}$}
    \State{$mayPass \gets \{ q \in Rej \mid \forall \sigma \in \Sigma (\delta(q,\sigma) \in sinkDont) \}$}
    \State{$mayPass \gets \text{backward\_reachability}(A, mayPass)$}
    \State{$components \gets \text{get\_strongly\_connected\_components}(A)$}
    \For{$C$ in $components$}
        \If{$C \cap Rej \ne \emptyset$}
            \State{$mayPass \gets mayPass \cup C $}
        \EndIf
    \EndFor
    \State{$mayPass \gets \text{backward\_reachability}(A, mayPass)$}
    \State{$Rej' \gets Rej \cap mayPass$}
    \State{$Dont' \gets Dont  \cup (Rej \setminus Rej')$}
    \State{\Return{$\tuple{Q,\Sigma,q_0,\delta,Acc,Rej', Dont'}$}}
\end{algorithmic}
\end{algorithm}

{\bfseries 3DFA manipulation for ED.} For early detection explanation (ED), we relabel the 3DFA states based on \autoref{def:early-detection} using the process shown in \autoref{alg:early-detection}: First, the sets of states that can reach $Acc$ and $Rej$ (lines 1-2) are computed. Then (lines 3-5), the 3DFA is updated to reflect the consistency described in \autoref{def:early-detection} by adding to $Acc$ states from which $Acc$ is reachable, but $Rej$ is not reachable.

\begin{algorithm}
\caption{Early Detection Relabeling for 3DFA $A$}
\label{alg:early-detection}
\renewcommand{\algorithmicrequire}{\textbf{Input:}}
\begin{algorithmic}[1]
\Require{$A = \tuple{Q,\Sigma,q_0,\delta,Acc,Rej,Dont}$}
    \State{$reachAcc \gets \text{backward\_reachability}(A, Acc)$}
    \State{$reachRej \gets \text{backward\_reachability}(A, Rej)$}
    \State{$Acc' \gets reachAcc \setminus reachRej$}
    \State{$Dont' \gets Dont \setminus Acc'$}
    \State{\Return{$\tuple{Q,\Sigma,q_0,\delta,Acc',Rej,Dont'}$}}
\end{algorithmic}
\end{algorithm}

{\bfseries 3DFA manipulation for EDEFE.} For early detection eventual failure explanation (EDEFE), the algorithm first relabels the 3DFA using the process described in \autoref{alg:eventual-fe}. Then, it proceeds with the updated 3DFA to \autoref{alg:early-detection}.

{\bfseries 3DFA to 2DFA.} The final stage of the algorithm involves computing the final DFA from the 3DFA. There are various possible approaches to this step, with varying levels of computational complexity, for example, the one proposed by Paull and Unger \cite{paull_minimizing_1959}. We rely on a passive learning algorithm to minimize the runtime overhead, but this choice is not inherent to the method we present. Further details and implementation details are provided in the following section.




\section{Evaluation}
\label{sec:evaluation}


We continue with a quantitative evaluation of the proposed approaches. 
Our guiding research questions are:
\begin{enumerate}[label=\textbf{RQ\arabic*.}, ref=\textbf{RQ\arabic*}, leftmargin=*]
    \item \label{itm:rq1} Can FE/EFE automata produce smaller descriptions than B?  
    \item \label{itm:rq2} Do ED automata improve description size?
    \item \label{itm:rq3} Do our approaches apply to realistic systems?
\end{enumerate}
To this end, the algorithms described in \autoref{sec:algorithm} were implemented using the AALpy automata learning library~\cite{muskardin_aalpy_2022}. For the heuristic equivalence query answering, we used a randomized version of the W-method algorithm~\cite{1702519} implemented in AALpy. To compute the final DFA from the 3DFA, we implemented a method that derives a sample of words from the 3DFA, which either reach $Acc$ states or $Rej$ states, denoted as $S = (S^{Acc}, S^{Rej})$. Subsequently, it identifies a minimal DFA consistent with $S$ using the RPNI algorithm~\cite{oncina_identifying_1992} implemented in AALpy. The complete code for the evaluations and the tool we have implemented are publicly available at~\cite{supplementarymaterial}. Experiments were conducted on an Xeon E5-2620 CPU with 32GB RAM.

\subsection{The RERS Challenge 2019 Benchmark}

The Rigorous Examination of Reactive Systems (RERS) 2019~\cite{jasper_rers_2019} contains several industrial benchmarks. 
These benchmarks are control software components of ASML's TWINSCAN lithography machines. The software components are given in Java code, and are categorized into two types: arithmetic computation and data structures.

Each software component functions as a reactive system that processes user input at each time step, updates its internal state variables accordingly, and generates an output in response. Some executions may result in assertion violations, leading the system to fail with a specific error code. Additionally, not all inputs are valid at all states, and the system terminates if such an invalid input is received at an incompatible state in a detectable way, which we refer to as the INVALID errors. 

We derived our evaluation dataset from the RERS 2019 benchmark as follows:
we began by executing the Java Pathfinder model checker~\cite{havelund_model_2000} on each system and each of its corresponding error codes, ignoring traces leading to INVALID errors.
This yielded a set of system traces for a subset of error codes for each system.
We filtered this set to a single trace, corresponding to a single system-error code pairing, of length 5 to 15. This pre-processing step resulted in traces for 27 out of 30 systems, with the remaining three excluded due to compilation issues or missing/short counterexamples. Our evaluation used a single error code for each system to ensure a focused bug description and to assess a more realistic scenario. 

We concentrated our analysis on alphabets of sizes 10 to 25. For systems with larger alphabets, we selectively trimmed transitions of certain letters irrelevant to the counterexample we aimed to explain. This step is a natural approach when dealing with real-world systems, where most operations are not directly related to the bug that engineers are trying to address. Automation of this step and other ways to handle extremely large alphabets are left for future research.

In this section we refer to the language of all valid system traces as $S$. That means that given a user-defined test space $T$, we treat all words outside of $S \cap T$ as ``don't-care''.

\subsection{Simulating Different Types of SUTs}

To evaluate the effect of our approach on different types of SUTs, we constructed several test spaces ($T$) and system configurations ($S$) using the failing tests identified by the model checker. Given $\text{cex}_i$, the counterexample (failing test) for system $i$, we defined three setups:


    \textbf{Unrestricted setup ($\UNR$):} In this setting, $S_{\UNR,i} = S_i$ and $T_{\UNR,i}=\Sigma_i^*$. This means that we do not restrict the test space, and any run allowed by the system ($S_i$) is considered a valid test.
    
    \textbf{Failure-driven setup ($\FDR$):} Here we also maintain the original system configuration, i.e., $S_{\FDR,i} = S_i$, but we define $T_{\FDR,i}$ to accept only words that contain all letters of $\text{cex}_i$.

    \textbf{Assertion-driven setup ($\ADR$):} In this setup, we updated the system configuration $S_i$ with a wrapper $S_{\ADR,i}$ that encapsulates $S_i$. The wrapper adds a letter, $\mathit{assert}$, to the alphabet. Moreover, if an error is encountered in $S_i$, $S_{\ADR,i}$ will only trigger the error after three $\mathit{assert}$. The goal here is to simulate a scenario where a system bug results in an assertion violation with a delay. We set $T_{\ADR, i}$ to accept words that end with $\mathit{assert}$, i.e., $\Sigma_i^*\mathit{assert}$. 
    

All tested benchmark systems and setups automata are available at~\cite{supplementarymaterial}.

We intentionally selected these setups to emphasize the significance of each failure explanation discussed in this paper. However, we believe they are general and cover a range of common system and failure behaviors, making them applicable to a broad spectrum of testing scenarios. By varying the test space and system configuration, we aim to capture different ways in which failures manifest and are detected in real-world systems. The $\UNR$ setup serves as a baseline, allowing unrestricted exploration of the system behaviors. The $\FDR$ setup focuses on elements contributing to a known failure. Finally, the $\ADR$ setup introduces a delayed error detection, modeling systems where failures propagate over time.

Our expectation was that $B$ would be bigger than our failure explanation in all setups. Specifically, in the $\UNR$ setup, we expected FE to be smaller than B. In the $\FDR$ setup, we expected the EDFE to be smaller than FE and B. In the $\ADR$ setup, we expected EFE to be smaller than FE and B.

\subsection{Results}

\begin{table*}[t]
    \centering
    \scriptsize
    \setlength{\tabcolsep}{1pt} 
    \begin{minipage}[t]{0.27\textwidth}
        \centering
        \begin{tabular}{c c c c c c c c c c}%
    & & 	& 	&  	& \bfseries 	Time 	& & 
    \\
    & \bfseries System	& \bfseries $|$cex$|$	& \bfseries $|\Sigma|$		& \bfseries 	$|$3DFA$|$ 	& \bfseries 	(sec.)  	& \bfseries $|$FE$|$ & \bfseries $|$B$|$ 
    \\\hline \hline
    \multirow{15}{*}{\rotatebox{90}{\bfseries Arithmetic Systems}} & m24 & 10 & 22 & 24 & 188 &\bfseries 4 & 12 \\ \cline{2-8}
    & m45 & 5 & 12 & 12 & 140 &\bfseries 4 & 8 \\ \cline{2-8}
    & m54 & 5 & 10 & 31 & 1721 &\bfseries 5 & 20 \\ \cline{2-8}
    & m55 & 14 & 25 & 33 & 1257 &\bfseries 7 & 26 \\ \cline{2-8}
    & m76 & 8 & 17 & 25 & 971 &\bfseries 4 & 19 \\ \cline{2-8}
    & m95 & 5 & 15 & 13 & 118 &\bfseries 4 & 7 \\ \cline{2-8}
    & m135 & 13 & 23 & 24 & 275 &\bfseries 4 & 18 \\ \cline{2-8}
    & m158 & 5 & 17 & 12 & 50 &\bfseries 4 & 7 \\ \cline{2-8}
    & m159 & 6 & 22 & 16 & 126 &\bfseries 4 & 8 \\ \cline{2-8}
    & m164 & 6 & 18 & 10 & 47 &\bfseries 2 & 8 \\ \cline{2-8}
    & m172 & 10 & 19 & 19 & 136 &\bfseries 4 & 12 \\ \cline{2-8}
    & m181 & 6 & 20 & 29 & 23556 &\bfseries 4 & 25 \\ \cline{2-8}
    & m183 & 5 & 12 & 12 & 251 &\bfseries 4 & 9 \\ \cline{2-8}
    & m185 & 15 & 23 & 47 & 995 &\bfseries 4 & 37 \\ \cline{2-8}
    & m201 & 7 & 19 & 23 & 688 &\bfseries 4 & 10 \\ \hline \hline
    \multirow{12}{*}{\rotatebox{90}{\bfseries \bfseries Data Structures}} & m22 & 9 & 23 & 26 & 394 &\bfseries 5 & 13 \\ \cline{2-8}
    & m27 & 9 & 25 & 15 & 116 &\bfseries 4 & 11 \\ \cline{2-8}
    & m41 & 5 & 22 & 12 & 274 &\bfseries 4 & 9 \\ \cline{2-8}
    & m106 & 11 & 24 & 22 & 330 &\bfseries 4 & 18 \\ \cline{2-8}
    & m131 & 6 & 21 & 10 & 120 &\bfseries 2 & 8 \\ \cline{2-8}
    & m132 & 5 & 20 & 12 & 13092 &\bfseries 3 & 8 \\ \cline{2-8}
    & m167 & 8 & 25 & 15 & 141 &\bfseries 3 & 10 \\ \cline{2-8}
    & m173 & 6 & 21 & 22 & 1695 &\bfseries 4 & 17 \\ \cline{2-8}
    & m182 & 6 & 14 & 20 & 1455 &\bfseries 4 & 10 \\ \cline{2-8}
    & m189 & 7 & 21 & 13 & 88 &\bfseries 3 & 9 \\ \cline{2-8}
    & m196 & 7 & 24 & 75 & 25430 &\bfseries 9 & 36 \\ \cline{2-8}
    & m199 & 7 & 20 & 13 & 159 &\bfseries 4 & 11 \\ \hline \hline
    \end{tabular}
        \captionof{table}{Results of the $\UNR$ setup}
     \label{tab:rers-results-1}
    \end{minipage}
    \hfill
    \begin{minipage}[t]{0.4\textwidth}
        \centering
        \begin{tabular}{c c c c c c c c c c}%
& & 	& & 	& 	& \bfseries 	Time  	&  & 	 	& 	
\\
& \bfseries System	& \bfseries $|$cex$|$	& \bfseries $|\Sigma|$	& \bfseries $|$T$|$	& \bfseries 	$|$3DFA$|$ 	& \bfseries 	(sec.)  	& \bfseries $|$FE$|$ & \bfseries $|$EDFE$|$	 	& \bfseries $|$B$|$ 	
\\\hline \hline
\multirow{15}{*}{\rotatebox{90}{\bfseries Arithmetic Systems}} & m24 & 10 & 22 & 512 & 21 & 142 &\bfseries 4 &\bfseries 4 & 12 \\ \cline{2-10}
& m45 & 5 & 12 & 32 & 9 & 100 & 4 &\bfseries 3 & 7 \\ \cline{2-10}
& m54 & 5 & 10 & 16 & 134 & 6094 &\bfseries 7 &\bfseries 7 & 20 \\ \cline{2-10}
& m55 & 14 & 25 & 1024* & \multicolumn{5}{c}{\emph{timeout}}\\ \cline{2-10}
& m76 & 8 & 17 & 256 & 268 & 886 &\bfseries 3 &\bfseries 3 & 10 \\ \cline{2-10}
& m95 & 5 & 15 & 32 & 39 & 92 &\bfseries 3 &\bfseries 3 & 7 \\ \cline{2-10}
& m135 & 13 & 23 & 1024* & 31 & 221 &\bfseries 4 &\bfseries 4 & 15 \\ \cline{2-10}
& m158 & 5 & 17 & 32 & 33 & 51 & 6 &\bfseries 4 & 7 \\ \cline{2-10}
& m159 & 6 & 22 & 64 & 48 & 111 & 8 &\bfseries 4 & 8 \\ \cline{2-10}
& m164 & 6 & 18 & 32 & 8 & 43 & 1 & 1 & 8 \\ \cline{2-10}
& m172 & 10 & 19 & 64 & 20 & 108 &\bfseries 4 &\bfseries 4 & 12 \\ \cline{2-10}
& m181 & 6 & 20 & 64 & 20 & 10016 &\bfseries 3 &\bfseries 3 & 8 \\ \cline{2-10}
& m183 & 5 & 12 & 32 & 60 & 602 & 25 &\bfseries 8 & 9 \\ \cline{2-10}
& m185 & 15 & 23 & & \multicolumn{5}{c}{\emph{out of memory}}\\ \cline{2-10}
& m201 & 7 & 19 & 64 & 25 & 280 &\bfseries 4 &\bfseries 4 & 9 \\ \hline  \hline 
\multirow{12}{*}{\rotatebox{90}{\bfseries \bfseries Data Structures}} & m22 & 9 & 23 & 512* & 56 & 390 &\bfseries 4 & 6 & 11 \\ \cline{2-10}
& m27 & 9 & 25 & 64 & 43 & 106 &\bfseries 6 &\bfseries 6 & 11 \\ \cline{2-10}
& m41 & 5 & 22 & 32 & 39 & 377 & 9 &\bfseries 5 & 11 \\ \cline{2-10}
& m106 & 11 & 24 & 1024 & 13 & 283 &\bfseries 3 &\bfseries 3 & 13 \\ \cline{2-10}
& m131 & 6 & 21 & 32 & 8 & 113 & 1 & 1 & 8 \\ \cline{2-10}
& m132 & 5 & 20 & 32 & 14 & 19405 &\bfseries 3 & 4 & 8 \\ \cline{2-10}
& m167 & 8 & 25 & 256 & 10 & 116 & 1 & 1 & 10 \\ \cline{2-10}
& m173 & 6 & 21 & 64 & 32 & 1015 &\bfseries 3 &\bfseries 3 & 8 \\ \cline{2-10}
& m182 & 6 & 14 & 32 & 12 & 1184 & 4 &\bfseries 3 & 9 \\ \cline{2-10}
& m189 & 7 & 21 & 128 & 17 & 83 &\bfseries 3 &\bfseries 3 & 9 \\ \cline{2-10}
& m196 & 7 & 24 & 64 & \multicolumn{5}{c}{\emph{timeout}}\\ \cline{2-10}
& m199 & 7 & 20 & 128* & 139 & 199 &\bfseries 4 &\bfseries 4 & 10 \\ \hline \hline
\end{tabular}
        \captionof{table}{Results of the $\FDR$ setup}
     \label{tab:rers-results-2}
    \end{minipage}
    \hfill
    \begin{minipage}[t]{0.32\textwidth}
        \centering
        \begin{tabular}{c c c c c c c c c c c}
& & & 	&  	& \bfseries 	Time  	& & 	& 	
\\
& \bfseries System	& \bfseries $|$cex$|$	& \bfseries $|\Sigma|$	&  \bfseries 	$|$3DFA$|$ 	& \bfseries 	(sec.)  	& \bfseries $|$FE$|$ & \bfseries $|$EFE$|$	 	& \bfseries $|$B$|$ 	
\\\hline  \hline
\multirow{15}{*}{\rotatebox{90}{\bfseries Arithmetic Systems}} & m24 & 13 & 23 & 42 & 383 & 6 &\bfseries 5 & 15 \\ \cline{2-9}
& m45 & 8 & 13 & 26 & 336 & 6 &\bfseries 5 & 12 \\ \cline{2-9}
& m54 & 8 & 11 & 68 & 5678 & 7 &\bfseries 4 & 25 \\ \cline{2-9}
& m55 & 17 & 26 & 84 & 3549 & 11 &\bfseries 8 & 32 \\ \cline{2-9}
& m76 & 11 & 18 & 38 & 1789 & 7 &\bfseries 4 & 17 \\ \cline{2-9}
& m95 & 8 & 16 & 28 & 265 & 7 &\bfseries 4 & 10 \\ \cline{2-9}
& m135 & 16 & 24 & 42 & 610 & 8 &\bfseries 5 & 18 \\ \cline{2-9}
& m158 & 8 & 18 & 26 & 128 & 7 &\bfseries 6 & 10 \\ \cline{2-9}
& m159 & 9 & 23 & 30 & 302 & 7 &\bfseries 4 & 11 \\ \cline{2-9}
& m164 & 9 & 19 & 20 & 127 & 5 &\bfseries 2 & 11 \\ \cline{2-9}
& m172 & 13 & 20 & 30 & 271 & 7 &\bfseries 4 & 15 \\ \cline{2-9}
& m181 & 9 & 21 & 36 & 17702 & 6 &\bfseries 5 & 20 \\ \cline{2-9}
& m183 & 8 & 13 & 28 & 718 & 8 &\bfseries 5 & 13 \\ \cline{2-9}
& m185 & 18 & 24 & 72 & 1628 & 7 &\bfseries 6 & 23 \\ \cline{2-9}
& m201 & 10 & 20 & 58 & 2640 & 9 &\bfseries 6 & 12 \\ \hline  \hline
\multirow{12}{*}{\rotatebox{90}{\bfseries \bfseries Data Structures}} & m22 & 12 & 24 & 64 & 935 & 6 &\bfseries 5 & 21 \\ \cline{2-9}
& m27 & 12 & 26 & 26 & 243 & 7 &\bfseries 4 & 14 \\ \cline{2-9}
& m41 & 8 & 23 & 28 & 770 & 7 &\bfseries 4 & 13 \\ \cline{2-9}
& m106 & 14 & 25 & 38 & 793 & 7 &\bfseries 4 & 20 \\ \cline{2-9}
& m131 & 9 & 22 & 24 & 241 & 5 &\bfseries 2 & 11 \\ \cline{2-9}
& m132 & 8 & 21 & 26 & 18806 & 6 &\bfseries 3 & 12 \\ \cline{2-9}
& m167 & 11 & 26 & 32 & 353 & 6 &\bfseries 3 & 13 \\ \cline{2-9}
& m173 & 9 & 22 & 32 & 2607 & 8 &\bfseries 5 & 15 \\ \cline{2-9}
& m182 & 9 & 15 & 40 & 2463 &\bfseries 8 &\bfseries 8 & 13 \\ \cline{2-9}
& m189 & 10 & 22 & 24 & 198 & 6 &\bfseries 5 & 12 \\ \cline{2-9}
& m196 & 7 & 24 & \multicolumn{5}{c}{\emph{timeout}} \\ \cline{2-9}
& m199 & 10 & 21 & 28 & 364 & 7 &\bfseries 4 & 14 \\ \hline  \hline
\end{tabular}
        \captionof{table}{Results of the $\ADR$ setup}
     \label{tab:rers-results-3}
    \end{minipage}
\end{table*}


We now present the evaluation results for the three setups described above. Each experiment is summarized in two tables: one for the arithmetic computations benchmark systems and the other for those involving data structures. The columns in the tables are as follows: ``System'' identifies a specific case study within the benchmark, $|$cex$|$ represents the length of the counterexample used, $|\Sigma|$ denotes the size of the alphabet, $|$3DFA$|$ indicates the size of the 3DFA obtained in the first stage of our algorithm, $|$FE$|$, $|$EFE$|$, and $|$EDFE$|$ correspond to the sizes of the automata we produce. These values are all compared against $|$B$|$, the size of the baseline automaton that describes the bug directly. The time presented in the tables is for the entire execution time of the algorithm presented in \autoref{sec:algorithm}. We note that the relabeling and RPNI steps were relatively negligible and did not take more than 5 seconds for each benchmark.

\autoref{tab:rers-results-1} presents the results of the $\UNR$ setup in our evaluation. A statistical analysis, using the Mann-Whitney (M-W) test, confirms that FE is significantly smaller than B (\( p \approx 10^{-10} \)), supporting the claim of its advantage. Furthermore, Spearman correlation tests show that the advantage (\( |\text{B}| - |\text{FE}| \)) is positively correlated with key parameters: it increases with longer counterexamples (\( \rho = 0.669, p \approx 10^{-4} \)), and larger alphabet sizes (\( \rho = 0.429, p \approx 0.026 \)), confirming the impact of these factors. 

Results of the $\FDR$ setup are presented in \autoref{tab:rers-results-2}. Here, we also include the size of $T$ derived from the counterexample, cex, and the size of the extracted EDFE. During execution, memory limitations affected four systems (marked with *), particularly when handling equivalence queries in the computation of $T$. These issues arose due to the large sizes of $T$ and the 3DFA, specifically during the subset computation (line 4) of \autoref{alg:equivalence}. We employed a heuristic alternative for these cases to mitigate this, applying the W-method directly over $T$ to identify counterexamples. Additionally, the evaluation timed out (exceeding 10 hours) for two systems, and one ran out of memory (exceeding 32GB).

The M-W test indicates that FE and EDFE are significantly smaller than $B$ ($p \approx 10^{-7}$ and $p \approx 10^{-9}$, respectively), while no significant difference was found between FE and EDFE ($p \approx 0.653$). Spearman correlation tests reveal a strong positive correlation between $B$ and the counterexample length ($\rho = 0.641, p \approx 10^{-4}$) and the test space size $|T|$ ($\rho = 0.498, p \approx 10^{-2}$). Neither FE nor EDFE significantly correlates with $|$cex$|$, $|\Sigma|$, or $|$T$|$.


Note that there are three cases where $|$FE$| = 1$. These correspond to systems in which $T$ is degenerated, meaning that $T$ became so focused that the only remaining tests were either the counterexample itself or highly similar failing executions.
Since such cases are unlikely to be useful for developers to describe the bugs in practice, we also conducted the statistical analysis excluding these instances.
The results remained consistent, showing similar correlations and significance.

Results of the $\ADR$ setup are presented in \autoref{tab:rers-results-3}. Here, we added the size of the EFE DFA found in the evaluation.  The statistical analysis of this setup confirms that both FE and EFE are significantly smaller than $B$, as demonstrated by the M–W tests ($p \approx 10^{-10}$ and $p \approx 10^{-10}$, respectively). A statistically significant difference was also observed between FE and EFE ($p \approx 10^{-7}$). Spearman correlation analysis reveals a strong positive correlation between B and the counterexample length ($\rho = 0.604, p \approx 10^{-3}$), as well as moderate correlations between both FE and EFE and the size of the 3DFA.

Recall that this experiment was conducted on a system where the bug consistently appears three assertions after its cause. In light of this, it is unsurprising that $|$EFE$|$ is often exactly three states smaller than $|$FE$|$. In most cases, the size reduction comes from omitting these extra steps, suggesting that engineers could use the larger FE automaton without being significantly affected by the additional three states.

\subsection{Discussion}
\label{sec:discussion}

Our evaluation shows that our approach can be applied to realistic systems, which answers \ref{itm:rq3}. Moreover, it demonstrates a statistically significant advantage of FE and EFE automata, addressing \ref{itm:rq1}. Regarding \ref{itm:rq2}, while EDFE automata improved the description size in some systems in the $\FDR$ setup, we observed no consistent advantage of ED. We believe that this is due to limitations in the test space definition, which may not align well with the systems under test. A more comprehensive evaluation is needed to fully assess the benefits of early detection for both FE and EFE.

In the process of implementing the algorithm discussed in \autoref{sec:algorithm}, we also considered using SMT solvers for the process of extracting a consistent DFA from the learned 3DFA. This approach constructs a set of constraints to check whether there exists a DFA of size $k$ that is consistent with the 3DFA. This is achieved by a function that maps each state in the 3DFA to a state in the DFA. The goal of the solver is to find such a function that results in a consistent DFA. A minimal consistent DFA is found using a binary search over the potential DFA size interval, ranging from one to the size of the 3DFA. Our implementation of this method presented significant computational challenges without offering improvements over RPNI. Therefore, we decided not to include it in our report and focused solely on the RPNI algorithm. However, we find this approach promising, as the solver's flexibility allows for the incorporation of additional constraints that can refine the resulting DFA. For instance, one could minimize the number of non-self transitions to obtain a more interpretable DFA. While this method was not practical for our current study due to scalability issues, further research is needed to explore its potential improvements and applicability in larger-scale settings. We leave this investigation for future work.

Our evaluation focused on a set of real-world Java control software benchmarks. While these benchmarks demonstrate the feasibility of our approach and its applicability to realistic systems with varying complexity, they may not capture the full spectrum of complexities found in broader scenarios. Nonetheless, we believe they are reasonably representative for the scope of this paper and plan to study additional examples in future work.

\section{Related Work}
\label{sec:related-work}

Error explanations aim to help users understand the core cause of a failure and how to potentially correct it. Several approaches for generating meaningful explanations have been suggested in recent years. An approach with the most resemblance to ours was proposed by Chockler et al.~\cite{chockler_learning_2020}, who introduced a method for explaining software errors by learning a DFA that captures the faulty behavior using the $L^*$ algorithm. In the context of our work, they suggest learning the language $B$ when $T=\Sigma^*$. Our work, in some ways, extends this approach by proposing models that abstract irrelevant information typically included in $B$, providing general models that focus solely on the faulty behavior.

Some error-explanation techniques rely on counterexample-based analysis to provide insights about the errors. Notable examples include methods featured in the SLAM model checker~\cite{ball_symptom_2003} and Java Pathfinder~\cite{groce_what_2003}. These techniques typically identify multiple versions of an error, along with similar execution traces that do not lead to a failure. By analyzing these executions, they extract and highlight the key elements of the error. Sharygina and Peled~\cite{sharygina_combined_2001} suggested an exploration technique of the ``neighborhood'' of a counterexample to better understand the error. Groce et al.~\cite{groce_error_2006} proposed an improved automated approach for understanding and isolating errors based on distance metrics for program executions. In another work, Aboussoror et al.~\cite{aboussoror_seeing_2012} propose Metaviz, a model-driven framework for customizing simulation trace visualizations in real-time embedded systems.

Mayer and Stumptner survey model-based software debugging methods~\cite{DBLP:conf/kbse/MayerS08}, which localize faults by constructing a program model from test cases and identifying discrepancies with expected outcomes. Unlike traditional techniques that rely on test case datasets, our approach takes a model of the test space, $T$, as input (although it may utilize test cases to enhance results or performance). This method offers increased flexibility, enabling testers to both over- and under-approximate various aspects of the test space. Rather than identifying faulty code, it provides abstract explanations of potentially problematic behavioral patterns.

The framework of active automata learning from an incomplete teacher has been extensively explored in recent years~\cite{grinchtein_inferring_2006,chen_learning_2009,moeller_automata_2023}. Grinchtein et al.~\cite{grinchtein_inferring_2006} presented an adaptation of $L^*$ to handle missing information with a SAT solver for inferring network invariants. Chen et al.~\cite{chen_learning_2009} considered an improved algorithm for learning in this framework to identify a minimal separating DFA between two regular languages for compositional verification. Their approach involves initially computing a 3DFA and subsequently transforming it into a minimal consistent DFA using the algorithm outlined in~\cite{paull_minimizing_1959}. However, this algorithm is exponential in complexity, whereas the RPNI algorithm, which we employ here for this conversion, runs in polynomial time.

In a subsequent work, Leucker and Neider~\cite{leucker_learning_2012} presented this framework and surveyed some learners that operate within it. These include a naive approach that enumerates all DFAs of increasing size, the approach of~\cite{grinchtein_inferring_2006}, as well as the approach of~\cite{chen_learning_2009} that we described earlier. They do not explore any implementation aspects or benchmarking. To the best of our knowledge, the work presented here is the first to discuss this framework for error explanations.

\section{Conclusion and Future Work}
\label{sec:conclusion}

We presented a framework that uses automata to generate concise bug descriptions. Specifically, we introduced \emph{Failure Explanations (FE)}, \emph{Eventual Failure Explanations (EFE)}, and \emph{Early Detection (ED)}, concepts that highlight essential failure patterns by filtering out irrelevant behavior. Evaluation on real-world benchmarks shows that these techniques yield significantly more compact and focused descriptions than the original traces.
Motivated by our prior work on testing with behavioral programming, where bugs were often hard to interpret, we plan to investigate how the proposed explanations can enhance practitioner understanding in that context.


\section*{Acknowledgements}

This research was supported by the Lynne and William Frankel Center for Computer Science at Ben-Gurion University and by funding from the Israel Science Foundation (ISF), grant number 2714/19.

\bibliographystyle{IEEEtran}
\bibliography{more,references}
\end{document}